\newcommand{\beq}{\begin{equation}} 
\newcommand{\eeq}{\end{equation}} 
\newcommand{\bea}{\begin{aligned}}
\newcommand{\eea}{\end{aligned}}
\newcommand{\bdm}{\begin{displaymath}}
\newcommand{\edm}{\end{displaymath}}
\newcommand{\barr}{\begin{array}}
\newcommand{\earr}{\end{array}}
\newcommand{\ben}{\begin{enumerate}}
\newcommand{\een}{\end{enumerate}}
\newcommand{\bde}{\begin{description}}
\newcommand{\ede}{\end{description}}
\newcommand{\K}{\mathcal{K}}
\newtheorem{teor}{Theorem}[section]
\newtheorem{lem}[teor]{Lemma}
\newtheorem{rem}[teor]{Remark}
\newcommand{\R}{\mathbb{R}}
\newcommand{\N}{\mathbb{N}}
\newcommand{\PP}{\mathbb{P}}
\newcommand{\E}{{\mathbb{E}}}
\newcommand{\EE}{{\sf{E}}}
\newcommand{\e}{\epsilon}
\newcommand{\psu}{\mathcal{M}_1^+(\R)}
\newcommand{\ps}[1]{\mathcal{M}_1^+(\R^{#1})}
\newcommand{\Phib}{\boldsymbol \Phi}
\newcommand{\mK}{\mathcal{K}}
\newcommand{\Pqm}{P^{\, q,\bf m}}
\newcommand{\Pqmb}{P^{\, \bar q,{\bf \bar m}}}
\newcommand{\Eqm}{\EE^{\, q,\bf m}}
\newcommand{\Eqmb}{\EE^{\, \bar q,{\bf \bar m}}}
\newcommand{\Hqmb}{H^{\, \bar q,{\bf \bar m}}}
\newcommand{\Hqm}{H^{\, q,\bf m}}
\newcommand{\xj}{{\bf x}_j}
\begin{document}

\title[GREM approximation of TAP free energies]{On the GREM approximation of TAP free energies}

\author[G. Sebastiani]{Giulia Sebastiani}
\address{Giulia Sebastiani \\ J.W. Goethe-Universit\"at Frankfurt, Germany.}
\email{sebastia@math.uni-frankfurt.de}

\author[M. A. Schmidt]{Marius Alexander Schmidt}
\address{M. A. Schmidt \\ J.W. Goethe-Universit\"at Frankfurt, Germany.}
\email{M.Schmidt@mathematik.uni-frankfurt.de}

\thanks{We are grateful to Nicola Kistler for leading the way in this line of research.}

\subjclass[2000]{60J80, 60G70, 82B44} \keywords{Mean Field Spin Glasses, Large Deviations, Gibbs-Boltzmann and Parisi Variational Principles, Random Energy Models} 

\date{\today}

\begin{abstract}
We establish both a Boltzmann-Gibbs principle and a Parisi formula for the limiting free energy of an abstract GREM (Generalized Random Energy Model) which provides an approximation of the TAP (Thouless-Anderson-Palmer) free energies associated to the Sherrington-Kirkpatrick (SK) model.
\end{abstract}
\maketitle
\date{\today}
\maketitle
\section{Introduction}

Introduced by Derrida \cite{D, Dgrem} in the 80s, the Generalized Random Energy Models (GREMs) serve as simplified, abstract models that are completely solvable and have significantly contributed to our comprehension of specific facets within the Parisi theory for mean field spin glasses \cite{MPV}.

The simplest of Derrida’s models is the Random Energy Model (REM) \cite{D}, whose Hamiltonian at volume $N\in\N$ is given by a vector $\{H_N(\sigma)\}_{\sigma\in\{-1,+1\}^N}$ of independent Gaussian random variables with variance $N$. The GREMs \cite{Dgrem} generalize the REM, featuring hierarchically correlated Gaussian fields that inherently exhibit the ultrametricity predicted by the Parisi theory. 

In \cite{REMTAP} we related the REM and the Thouless-Anderson-Palmer (TAP) free energies \cite{TAP} for the Sherrington-Kirkpatrick (SK) model \cite{sk}, the prototypical mean field spin glass. The linkage is based on the observation that the free energy of {\it TAP-solutions} for the SK can be expressed as a {\textit{highly nonlinear}} functional of the {\textit{local fields}}  (see \eqref{eff_fields} and \eqref{TAP_SK_function} below). Through a {\textit{replacement}} of these fields we obtain abstract models involving only the alleged geometrical properties of the {\it relevant} TAP-solutions, that are solvable using Boltzmann formalism through a classical, Sanov-type large deviation analysis.

To see how this unfolds and for the reader's convenience, we briefly review the setting given in Section 2 of \cite{REMTAP}. The starting point is a non-rigorous result \cite{TAP} of Thouless, Anderson, and Palmer, which expresses the free energy associated to the SK Hamiltonian (by means of a 
diagrammatic expansion of the partition function) as a constrained maximum of a certain disorder-dependent function, the so-called TAP free energy \eqref{TAP_fe_no_cav}. This function is defined on the space of possible magnetizations of the spins, given as $[-1,1]^N$ where $N$ is the system size.

With $\beta \in \R^+$ the inverse temperature parameter, $h\in \R$ an external magnetic field and $J_{ij}$ i.i.d. standard Gaussians issued on some probability space $(\Omega, \mathcal{F}, \PP)$, namely the quenched disorder, the SK Hamiltonian at volume $N$ is given by the Gaussian field $\sigma\in\{-1,+1\}^N\mapsto (2N)^{-\frac{1}{2}}\beta\sum_{i \neq j} J_{ij}\sigma_i\sigma_j+h\sum_{i=1}^N \sigma_i$. The TAP free energy in ${\bf m}=(m_i)_{i=1}^N$ can be written, shortening $\bar J_{ij}:=(J_{ij}+J_{ji})/\sqrt{2}$, as
\beq \label{TAP_fe_no_cav}
\frac{\beta}{\sqrt{N}} \sum\limits_{i < j } \bar J_{ij} m_i m_j +h\sum\limits_{i=1}^{N} m_i -\sum\limits_{i=1}^{N}I(m_i) +\frac{\beta^2N}{4}(1-\frac{1}{N}\sum\limits_{i=1}^{N} m_i^2)^2
\eeq
where $I$ is the classic coin tossing rate function $I(m):=\frac{1+m}{2}\log(1+m)+\frac{1-m}{2}\log(1-m)$, which can also be expressed as 
\beq \label{alternative_coin_tossing}
I(m)=m\tanh^{-1}(m)-\log\cosh\tanh^{-1}(m).
\eeq
The TAP formulation is notably compelling, providing a physically meaningful representation consisting of three discernible components: internal energy, entropic contribution and Onsager correction. Hence, the construction and study of simplified models incorporating the structural elements of the TAP formulation gives a natural avenue to enhance the framework that ideally bridges the classical Boltzmann-Gibbs theory with the powerful and intricate, yet as of now still not fully understood, Parisi theory for mean field spin glasses.\\

A quick computation shows that the critical points of the TAP free energy \eqref{TAP_fe_no_cav} must be solutions of the {\textit{TAP equations}} 
\beq \label{TAP_eq} m_{i} = \tanh[\, h + \beta h_i({\bf m})\,], \qquad i=1, \dots, N,\eeq
where 
\beq \label{eff_fields}
h_i({\bf m}):=\frac{1}{\sqrt{N}} \sum_{j\neq i} \bar J_{ij}m_j - \beta(1-\frac{1}{N}\sum_{j=1}^Nm^2_j)m_i.
\eeq

\begin{rem}
For an analysis of the TAP equations, see e.g. \cite{FMM, GSS, T11}.
A proof that a constrained maximum of the TAP free energy yields the exact SK free energy can be found in \cite[Theorem 1]{CP}; the formulation of the constraint however strongly relies on the celebrated Parisi formula for the SK free energy. For more on the TAP-Plefka approximation within Guerra's interpolation scheme \cite{g}, see \cite{CPS} and references therein. 
\end{rem}

If we assume to have a solution to \eqref{TAP_eq}, say ${\bf m}^{\alpha}\in[-1,1]^N$, and evaluate the TAP free energy \eqref{TAP_fe_no_cav} in  ${\bf m}^{\alpha}$; we find it to be equal to $\text{TAP}_N({\bf m}^{\alpha})$, where 
\beq \label{TAP_SK_function} \bea 
\text{TAP}_N({\bf m}) & := \frac{\beta^2}{4}N\left[\,1-\left(\frac{1}{N}\sum_{j=1}^N\tanh^2(h+\beta h_j({\bf m}))\right)^{\,2}\,\right] +\\
& \quad + \sum_{i=1}^N \left(\,- \frac{\beta}{2}h_i({\bf m})\tanh(h+\beta h_i({\bf m})) + \log\cosh(h+\beta h_i({\bf m}))  \, \right).
\eea \eeq
This is easily seen, as follows. If ${\bf m}^{\alpha}$ is a solution to \eqref{TAP_eq}, then the equality
$$\beta h_i({\bf m}^\alpha)=\tanh^{-1}(m_{i}^\alpha) -  h$$ 
must hold for every $i=1,\dots, N$. Using the definition \eqref{eff_fields}, the latter implies
\beq \label{Jij_substitution}\frac{\beta}{2}\frac{1}{\sqrt{N}} \sum_{j\neq i} \bar J_{ij}m_j^\alpha = \frac{1}{2}\tanh^{-1}(m^\alpha_i) -\frac{h}{2} + \frac{\beta^2}{2}(1-q_N^{\,\alpha} )m^\alpha_i, \eeq
where we shortened 
$q_N^{\,\alpha} := \frac{1}{N}\sum_{j=1}^{N}(m^\alpha_j)^2$.
\\
Consider now the TAP free energy \eqref{TAP_fe_no_cav} evaluated in ${\bf m}={\bf m}^\alpha$, which can be rewritten as
$$\frac{\beta^2N}{4}(1-q_N^{\,\alpha})^2 + \sum\limits_{i=1}^{N} \left[\left( \frac{\beta}{2}\frac{1}{\sqrt{N}}\sum\limits_{j\neq i} \bar J_{ij} m^\alpha_j + h\right) m^\alpha_i - I(m^\alpha_i)\right].$$ 
Plugging \eqref{Jij_substitution} into this expression, we get
\beq \bea 
&\frac{\beta^2N}{4}(1-q_N^{\,\alpha})^2+ \frac{\beta^2N}{2}(1-q_N^{\,\alpha})q_N^{\,\alpha} +\sum\limits_{i=1}^{N} \left[\frac{1}{2}m^\alpha_i\tanh^{-1}(m^\alpha_i) +\frac{h}{2}m^\alpha_i - I(m_i^\alpha) \right] \\
&= \frac{\beta^2N}{4}[1-(q_N^{\,\alpha})^2]+\sum\limits_{i=1}^{N} \left[ -\frac{1}{2}m_i^\alpha\tanh^{-1}(m^\alpha_i) + \frac{h}{2}m_i^\alpha + \log\cosh\tanh^{-1}(m^\alpha_i) \right],\\
\eea \eeq
where we used the representation \eqref{alternative_coin_tossing} for the function $I$. 

Expressing each $m^\alpha_i$ as $\tanh(h+\beta h_i({\bf m}^{\alpha}))$, specifically $q_N^\alpha=\frac{1}{N}\sum_{i=1}^N \tanh(h+\beta h_i({\bf m}^{\alpha}))$, the latter becomes 
\beq \label{TAP_SK_function_2} \bea
& \frac{\beta^2}{4}N\left[\,1-\left(\frac{1}{N}\sum_{i=1}^N\tanh^2(h+\beta h_i({\bf m}^\alpha))\right)^{\,2}\,\right] +\\
& \qquad\qquad + \sum_{i=1}^N \left(\,- \frac{\beta}{2}h_i({\bf m}^\alpha)\tanh(h+\beta h_i({\bf m}^\alpha)) + \log\cosh(h+\beta h_i({\bf m}^\alpha))  \, \right),
\eea \eeq
which is exactly $\text{TAP}_N({\bf m}^\alpha)$ as defined in \eqref{TAP_SK_function}.
\vskip0.5cm

The {\textit{REM approximation of TAP free energies}} \cite{REMTAP} is achieved by replacing the local fields ${h_i({\bf m}^\alpha)}$ with a set of REM-like variables, specifically independent standard Gaussians ${X_{\alpha,i}}$, where $i=1,\dots,N$ and $\alpha=1\dots 2^{N}$. 

Positioned as the second natural step beyond the REM-approximation, this paper explores the \textit{GREM approximation of TAP free energies}, wherein the local fields are substituted with GREM-like variables (see \eqref{GREM_vectors} below). 
For the GREM approximation of TAP free energies, we can establish an orthodox Boltzmann-Gibbs principle and introduce a dual formula (given in Theorem \ref{parisi_grem_tap_teor} below) for the free energy, akin to the Parisi formula, which closely resembles the one given by the Parisi theory for models within the {\textit{1-step replica symmetry breaking}} approximation. The highlight is that the inherent non-linearities of \eqref{TAP_SK_function} bring the Parisi formulation for our GREM-like models into closer correspondence with that for the free energy of the SK, presenting a departure from conventional GREM formulations (for more on this, see Remark \eqref{remark_remtap_1rsb} below).
Our original result, Theorem \ref{parisi_grem_tap_teor} below, suggests that the correction to the Parisi formula, caused by the inherent nonlinearities, remains unchanged if we perform a \textit{GREM-replacement} instead of a \textit{REM-replacement}; that is, if we admit more than one level of hierarchy. Specifically, this correction does not vary for models within the \textit{$n$-steps replica symmetry breaking} approximation, where $n\in\N$ is the number of observable levels of hierarchy.\\

In the next section, we provide the necessary definitions and present our results, which we then prove in the subsequent two sections.

\section{Definitions and results}

Given $\Theta \in \R^+$ and $n\in \N$; we investigate, in an abstract setting, the following GREM-approximation of the TAP free energy \eqref{TAP_SK_function}
\beq \label{GREM_TAP_free_energy}
\lim_{N\to\infty} \frac{1}{N} \log \sum_{\alpha=1}^{2^{\Theta N}} \exp N {\bf \Phi}_{\text{TAP}}[ L_{N,\alpha} ]
\eeq
where, given positive numbers $\{\gamma_1,\dots, \gamma_n\}\,:\,\sum_{k =1}^n \gamma_k = 1$,
each configuration\\ $\alpha\in\{1,\dots, 2^{\Theta N}\}$ is identified with an $n$-tuple 
$$\alpha\equiv(\alpha_1,\dots,\alpha_n): \quad \alpha_k\in\{1,\dots, 2^{\gamma_k \Theta N}\}, \qquad k\leq n$$ 
and 
$$L_{N,\alpha}:=\frac{1}{N}\sum\limits_{i=1}^N \delta_{X_{\alpha,i}}$$
are the empirical measures associated to sequences $\{X_{\alpha,i}\}$ of random vectors
\beq \label{GREM_vectors}
X_{\alpha,i}=(X^{(1)}_{\alpha_1,i}, X^{(2)}_{\alpha_1, \alpha_2,i}, \dots,  X^{(n-1)}_{\alpha_1, \alpha_2\dots, \alpha_{n-1},i},   X^{(n)}_{\alpha,i})\in \R^n,
\eeq
with independent entries, being each $\{ X^{(k)}_{\alpha_1,\dots, \alpha_{k},i}\}_{\alpha_1,\dots, \alpha_{k},i}$ i.i.d. on $\R$.\footnote{We stress that $\R$ can be replaced with any Polish space, without this changing the validity of our proofs. We choose in this paper to just refer to the case in which this space is the real line for the sake of simplicity and with the aim of improving readability. For the {\textit{couple}} of topological considerations necessary to present the proofs in the case of variables defined on a generic Polish space, we refer the interested reader to previous works \cite{BK1, BK2, REMTAP}.}

Indicating with $\mathcal{M}^+_1(\R^n)$ the space of all regular Borel probability measures on $\R^n$ endowed with the topology of weak convergence of measures; the functional 
$${\bf \Phi}_{\text{TAP}}: \mathcal{M}^+_1(\R^n)\to \R$$ is continous and given by
\beq \label{TAP_SK_functional} \bea 
{\bf \Phi}_{\text{TAP}}[\nu] =& \Phi_{\text{SK}}\left( \, \int \varphi_{\text{SK}}\left(\sum_{k=1}^n x_k\right) \nu(d{\bf x}\,) \right)+\int f_{\text{SK}}\left(\sum_{k=1}^n x_k\right) \nu(d{\bf x})
\eea \eeq 
where, with ${\bf x}=(x_1,\dots, x_n)$ a generic point in $\R^n$  
\beq \label{specific_SK} \bea 
\Phi_{\text{SK}}:\R\to\R \,, \quad &\Phi_{\text{SK}}(x) := \frac{\beta^2}{4} (1-x^2) \,;\\
\varphi_{\text{SK}}:\R\to\R \,, \quad &\varphi_{\text{SK}}(x) := \tanh^2(h+\beta x)\,;\\
f_{\text{SK}}:\R\to\R \,, \quad & f_{\text{SK}}(x) := -\frac{\beta}{2}x\tanh(h+\beta x)+ \log\cosh(h+\beta x)\,.
\eea \eeq
It is readily checked that $\Phib_{\text{TAP}}[L_{N,\alpha}]$ 
is obtained taking \eqref{TAP_SK_function} for ${\bf m}={\bf m}^{\alpha}$ and substituting the fields $h_i({\bf m}^{\alpha})$ with the GREM-like variables 
\beq \label{grem_fields}
\sum_{k=1}^n X^{(k)}_{\alpha_1,\dots, \alpha_k,i}.
\eeq
As established in \cite{BK2}, if to given $\nu\in\ps{n}$ we indicate with $\nu^{(k)}\in\ps{k}$ its marginal on the first $k\in\{1,\dots,n\}$ coordinates, and consider the relative entropy 
\beq \label{def_rel_ent} H(\nu^{(k)}\mid \mu^{(k)}) := \begin{cases} \int \log \frac{d\nu^{(k)}}{d\mu^{(k)}} \, d\nu^{(k)} & \quad \text{if}\,\, \nu^{(k)}<<\mu^{(k)}\\
+\infty & \quad \text{otherwise}\,\,,
\end{cases}\eeq 
where 
$$\mu= \bigotimes_{k=1}^{n} \mu_k \in \mathcal{M}^+_1(\R^n)$$ 
is the common law of the vectors \eqref{GREM_vectors}; the following can be proved within a classical large deviation treatment via second moment method and Sanov's theorem (see Section \ref{section_proof_gibbs} below).

\begin{teor}[Boltzmann-Gibbs principle]\label{gibbs}
Let ${\bf \Phi}:  \mathcal{M}^+_1(\R^n)\to \R$ be a continous functional and consider
$$f_N := \frac{1}{N} \log \sum_{\alpha=1}^{2^{\Theta N}} \exp N {\bf \Phi}[ L_{N,\alpha} ].$$
Then the limit $\lim_{N\to\infty} f_N$ exists, is non-random and given by
\beq \label{GREM_TAP_free_energy_teor}
f := \sup_{\nu \in \mathcal K} \quad {\bf \Phi}[\nu] - H(\nu\mid \mu) + \Theta\log2  \\
\eeq
with
\beq \label{set_K}
\mathcal K := \left\{ \nu \in  \mathcal{M}^+_1(\R^n): \quad H(\nu^{(k)} \mid \mu^{(k)} ) \leq \, \Theta\log2 \sum_{j=1}^k \gamma_j \quad \forall k \in \{1,\cdots,n\} \right\}.
\eeq
\end{teor}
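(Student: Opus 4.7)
I would prove this by the standard GREM strategy: Sanov's theorem at each level of the hierarchy, combined with a Laplace-type discretization for the upper bound and a level-by-level conditional second moment argument for the lower bound. The hypothesis that $\Phi$ is continuous lets us reduce from the functional $\Phi[L_{N,\alpha}]$ to its value at fixed measures $\nu$, so everything comes down to estimating, for each $\nu$, the random count $\#\{\alpha : L_{N,\alpha}\in B_\nu\}$ on small balls $B_\nu$.

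For the upper bound, cover a compact exhausting family of $\mathcal{M}_1^+(\R^n)$ by finitely many balls $B_\nu$ of diameter $\eta$ on each of which $\Phi$ oscillates by at most $\eta$. If $\nu\in\mathcal{K}$, Sanov's theorem applied to the i.i.d.\ sample $(X_{\alpha,i})_{i\leq N}$ gives the expected count at most $\exp\bigl(N[\Theta\log 2-H(\nu\mid\mu)+o(1)]\bigr)$, and Markov's inequality yields the same bound with probability $1-o(1)$. If instead $\nu\notin\mathcal{K}$, some marginal bound $H(\nu^{(k)}\mid\mu^{(k)})\leq\Theta\log 2\sum_{j\leq k}\gamma_j$ is strictly violated; Sanov applied to the $2^{\Theta N\sum_{j\leq k}\gamma_j}$ independent level-$k$ empirical measures forces the count at that level (and hence at level $n$) to vanish with overwhelming probability. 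A union bound over the finite cover and letting $\eta\downarrow 0$ yield the upper bound.

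For the lower bound, fix $\nu$ in the interior of $\mathcal{K}$ and a ball $B_\nu$ of small radius $\eta$, and aim to show
\beq\nn
\#\{\alpha : L_{N,\alpha}\in B_\nu\}\geq\exp\bigl(N[\Theta\log 2-H(\nu\mid\mu)-o(1)]\bigr)
\eeq
with probability tending to $1$. The natural approach is an inductive conditional second moment argument down the tree: denoting by $N_k$ the number of level-$k$ tuples $(\alpha_1,\dots,\alpha_k)$ whose marginal empirical measure lies near $\nu^{(k)}$, show inductively that $N_k\geq\exp\bigl(N[\Theta\log 2\sum_{j\leq k}\gamma_j-H(\nu^{(k)}\mid\mu^{(k)})-o(1)]\bigr)$ with high probability. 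In the induction step, condition on the level-$k$ configurations: the $2^{\gamma_{k+1}\Theta N}$ extensions $\alpha_{k+1}$ then carry independent fresh draws of $X^{(k+1)}$, so Sanov in these fresh variables produces the correct conditional expected number of good extensions, and the strict inequality at level $k+1$ coming from interiority, together with independence of the extensions, makes the variance controllable by Chebyshev.

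The main obstacle is precisely this conditional second moment, because distinct full configurations $\alpha,\alpha'$ share the vectors $X^{(j)}$ up to their common ancestor level, which correlates their empirical measures $L_{N,\alpha}$ and $L_{N,\alpha'}$. The hierarchical induction above is tailored to this tree structure: at each step one conditions on the common-ancestor empirical measures before computing the variance, so that each variance computation reduces to a genuinely i.i.d.\ Sanov problem on the fresh coordinate $X^{(k+1)}$. Combined with the continuity of $\Phi$, the lower semicontinuity of $H(\cdot\mid\mu)$, and a standard approximation of $\mathcal{K}$ by its interior, this yields the matching lower bound and hence the variational formula \eqref{GREM_TAP_free_energy_teor}, while non-randomness of the limit follows from the high-probability concentration used throughout.
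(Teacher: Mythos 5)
Your proposal is correct and follows the same broad strategy as the paper: a second moment method on the counting variable combined with Sanov estimates, using the marginal structure of $L_{N,\alpha}$ to kill configurations outside $\mathcal K$. The upper bound argument is essentially identical. The one structural difference is in the lower bound: you organize the second moment as a level-by-level \emph{inductive conditional} argument (estimate $N_k$, condition, extend to level $k+1$), whereas the paper performs a \emph{single} variance computation, decomposing $\E M_N^2(U)$ by the common-ancestor overlap level and invoking a ``relative Sanov'' lemma (\cite[Lemma 3.2]{BK2}) to control each overlap class. Both routes must confront the same issue — that configurations sharing an ancestor have correlated empirical measures, and that the conditional law of $L_{N,\alpha}^{(k+1)}$ given $L_{N,\alpha}^{(k)}$ lives only near, not exactly at, $\nu^{(k)}$ — so the conditional Sanov estimates you need at each induction step are not softer than the relative Sanov lemma; the two presentations are essentially equivalent in difficulty, with the paper's being slightly more compact and yours slightly more explicit about the tree recursion. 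One point you should tighten: ``a standard approximation of $\mathcal K$ by its interior'' needs to be made concrete, since what is actually required is that all $n$ entropy inequalities hold strictly; the paper achieves this cleanly via the convex interpolation $\nu_a = a\mu + (1-a)\nu^*$ with the maximizer $\nu^*$, which forces strict inequalities for $a>0$ and converges appropriately as $a\downarrow 0$ (using concavity of $\Phib - H(\cdot\mid\mu)$ and convexity/compactness of $\mathcal K$ for existence and uniqueness of $\nu^*$).
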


Given bounded and continous functions $\varphi,f: \R^n \to \R$, as well as a differentiable function $\Phi:\R\to\R$; our result Theorem \ref{parisi_grem_tap_teor} below, concerns a specific case in which the functional $\Phib$ is of the form\footnote{Notice that choosing $\varphi({\bf x}) \equiv \varphi_{\text{SK}}(\sum_{k=1}^n x_k)$, $f({\bf x}) \equiv f_{\text{SK}}(\sum_{k=1}^n x_k)$ and $\Phi(x)\equiv \Phi_{\text{SK}}(x)$ the generic functional \eqref{specific_functional} becomes the specific \textit{TAP-functional} \eqref{TAP_SK_functional}.}
\beq \label{specific_functional} {\bf \Phi}[\nu] \equiv \Phi( \int \varphi d\nu) +  \int f d\nu \eeq
and extends to $n\geq 2$ the {\textit{duality}} between the Boltzmann-Gibbs infinite dimensional principle of Theorem \ref{gibbs} and a finite dimensional, Parisi principle. 
Precisely, we prove (see Section \ref{section_proof_parisi} below) that, considering the following recursively constructed functions
\beq \label{recursive_Par} \bea 
\Pqm_n(x_1, \dots, x_n) & = \Phi'(q)\varphi(x_1, \dots, x_n) + f(x_1, \dots, x_n), \\
\Pqm_{j-1} (x_1, \dots, x_{j-1}) & = \frac{1}{m_j} \log \int \exp{m_j \Pqm_j(x_1,\dots, x_j)} \mu_j(dx_j) \qquad \forall j\leq n
\eea \eeq
parametrized by $(q, {\bf m})$ such that $q\in [\inf \varphi, \, \sup \varphi]$ and 
$${\bf m} = (m_1, \dots, m_n)\in \Delta :=\{ {\bf m}\in[0,1]^n: 0 < m_1 \leq m_2 \leq \dots \leq m_n \leq 1\};$$ 
the following holds:

\begin{teor}[Parisi principle]\label{parisi_grem_tap_teor}
If $\varphi,f: \R^n \to \R$ are bounded; $\Phi:\R\to\R$ is a twice differentiable function with $\Phi''(x)< 0$ for all $x\in[\inf \varphi, \, \sup \varphi]$ and 
\beq \label{TAP_SK_functional_teor}
{\bf \Phi}[\nu] := \Phi( \int \varphi d\nu) +  \int f d\nu; 
\eeq
then
$$\sup_{\nu \in \mathcal K} \, {\bf \Phi}[\nu] - H(\nu\mid \mu) = \inf_{(q,{\bf m})\in[\inf \varphi,\, \sup \varphi]\times \Delta} P(q,{\bf m}),$$
\beq \label{parisi_grem_tap}
P(q,{\bf m}) := \Phi(q)-q\Phi'(q) + \Pqm_0 + \sum_{k=1}^n \frac{\gamma_k \Theta\log2}{m_k} - \Theta\log2. 
\eeq
\end{teor}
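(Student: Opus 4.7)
\emph{Strategy.} Since $\Phi''<0$ on $[\inf\varphi,\sup\varphi]$, $\Phi$ is strictly concave there and equals the infimum of its tangent lines:
\[
\Phi(y)=\inf_{q\in[\inf\varphi,\sup\varphi]}\bigl[\Phi(q)-q\Phi'(q)+\Phi'(q)\,y\bigr],
\]
the infimum being attained uniquely at $q=y$. Substituting $y=\int\varphi\,d\nu$ linearises the $\nu$-dependence of the functional: for fixed $q$ the inner problem in $\nu$ is affine and reduces to a linear Boltzmann-Gibbs principle on $\mathcal K$ with integrand $g_q:=\Phi'(q)\varphi+f$. By Lagrangian duality against the nested entropy constraints defining $\mathcal K$ (the classical linear GREM Parisi formula, see \cite{BK2}), one has
\[
\sup_{\nu\in\mathcal K}\Bigl[\int g_q\,d\nu-H(\nu\mid\mu)\Bigr]=\inf_{{\bf m}\in\Delta}\Bigl[P^{q,{\bf m}}_0+\sum_{k=1}^n\tfrac{\gamma_k\Theta\log 2}{m_k}-\Theta\log 2\Bigr],
\]
with $P^{q,{\bf m}}_0$ precisely the object produced by the recursion \eqref{recursive_Par} applied to $g=g_q$. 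The plan combines this linear-in-$\nu$ dual with the tangent decomposition of $\Phi$.

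\emph{Upper bound.} For any $q$, the concavity inequality $\Phi(\int\varphi\,d\nu)\le\Phi(q)-q\Phi'(q)+\Phi'(q)\int\varphi\,d\nu$ gives
\[
{\bf\Phi}[\nu]-H(\nu\mid\mu)\le\Phi(q)-q\Phi'(q)+\int g_q\,d\nu-H(\nu\mid\mu).
\]
Taking $\sup_{\nu\in\mathcal K}$ (attained, as $\mathcal K$ is weakly compact and the functional is upper semicontinuous), invoking the linear Parisi identity above, and then $\inf_q$, one obtains $\sup_{\nu\in\mathcal K}[{\bf\Phi}[\nu]-H(\nu\mid\mu)]\le\inf_{q,{\bf m}}P(q,{\bf m})$.

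\emph{Lower bound.} Pick a minimiser $(q^\ast,{\bf m}^\ast)$ of $P$; it exists in the interior because $P\to\infty$ as any $m_j\to 0$, while $q$ ranges over the compact $[\inf\varphi,\sup\varphi]$. By the attainment part of the linear Parisi principle there is $\nu^\ast\in\mathcal K$ achieving $\sup_{\nu\in\mathcal K}[\int g_{q^\ast}\,d\nu-H(\nu\mid\mu)]$. It remains to verify the \emph{self-consistency} $\int\varphi\,d\nu^\ast=q^\ast$, for then the tangent inequality becomes equality at $q=q^\ast$ and
\[
{\bf\Phi}[\nu^\ast]-H(\nu^\ast\mid\mu)=\Phi(q^\ast)-q^\ast\Phi'(q^\ast)+\int g_{q^\ast}\,d\nu^\ast-H(\nu^\ast\mid\mu)=P(q^\ast,{\bf m}^\ast),
\]
which yields LHS $\ge$ RHS. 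The self-consistency is read off the first-order condition $\partial_q P(q^\ast,{\bf m}^\ast)=0$: differentiating the recursion \eqref{recursive_Par} level by level, each step introduces a conditional expectation against the tilted density $e^{m_j P^{q,{\bf m}}_j}d\mu_j/\int e^{m_j P^{q,{\bf m}}_j}d\mu_j$, giving $\partial_q P^{q,{\bf m}}_0=\Phi''(q)\,{\mathbb E}^{q,{\bf m}}[\varphi]$ for the iterated Gibbs expectation ${\mathbb E}^{q,{\bf m}}$. Since $\partial_q P(q,{\bf m})=-q\Phi''(q)+\partial_q P^{q,{\bf m}}_0=\Phi''(q)\bigl({\mathbb E}^{q,{\bf m}}[\varphi]-q\bigr)$, vanishing at $q^\ast$ together with $\Phi''<0$ forces ${\mathbb E}^{q^\ast,{\bf m}^\ast}[\varphi]=q^\ast$. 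The identification ${\mathbb E}^{q^\ast,{\bf m}^\ast}[\varphi]=\int\varphi\,d\nu^\ast$ is built into the constructive description of the linear-Parisi optimiser as a hierarchical assembly of the recursive tilted measures.

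\emph{Main obstacle.} The substantive step is the final identification—recognising the iterated Gibbs expectation coming from differentiating the recursion as integration against the optimiser $\nu^\ast$ of the linear problem. This forces opening the proof of the linear Parisi formula rather than citing it as a black box, and making explicit the structure of $\nu^\ast$ in terms of the recursive tilted marginals. Everything else (upper bound, existence of maximisers/minimisers, boundary behaviour of $P$ as $m_j\to 0$) is routine given the boundedness of $\varphi,f$ and the strict concavity $\Phi''<0$.
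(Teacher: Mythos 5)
Your proposal is substantively correct, and at its core it uses exactly the same two ideas the paper does: the tangent-line (Fenchel) decomposition of the strictly concave $\Phi$, and the stationarity $\partial_q P(\bar q,\bar{\bf m})=0$, which by $\partial_q P=\Phi''(q)\bigl[\Eqm_n(\varphi)-q\bigr]$ and $\Phi''<0$ forces the self-consistency $\int\varphi\,dG=\bar q$ needed to turn the tangent inequality into an equality. Where you differ from the paper is in packaging: you treat the linear GREM Parisi formula of \cite{BK2} as a black box and close the argument by Fenchel duality plus first-order conditions, whereas the paper gives a self-contained proof by defining $G=G^{\bar q,\bar{\bf m}}$ explicitly via the tilted Radon--Nikodym density \eqref{radon_candid_n}, showing $G\in\mathcal K$ from the KKT conditions on the $m_j$'s (exploiting the block structure of $\bar{\bf m}$ and the chain rule for relative entropy), and then proving $\Phib[\nu]-H(\nu\mid\mu)\le\Phib[G]-H(G\mid\mu)$ by a telescoping Donsker--Varadhan argument on the quantities $D_j$, with the concavity of $\Phi$ invoked exactly once at level $n$. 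In short, the paper's Lemma~\ref{lem_derivatives} \emph{is} the ``opening up'' of the linear Parisi proof that you correctly flag as your main obstacle; what you gain by factoring through \cite{BK2} is modularity, what you lose is that the needed structural description of the linear optimiser (that it is the hierarchically tilted measure, so that $\int\varphi\,d\nu^\ast=\mathbb{E}^{q^\ast,{\bf m}^\ast}[\varphi]$) is not actually part of the black-box statement and must be extracted from the proof anyway.

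One small imprecision: you justify interiority of the minimiser by $P\to\infty$ as $m_j\to0$, which controls ${\bf m}$ but says nothing about $q$ possibly sitting at an endpoint of $[\inf\varphi,\sup\varphi]$. This is easily repaired without assuming interiority in $q$: since $\inf\varphi\le\Eqm_n(\varphi)\le\sup\varphi$ and $\Phi''<0$, one has $\partial_q P\le0$ at $q=\inf\varphi$ and $\partial_q P\ge0$ at $q=\sup\varphi$, while one-sided minimality at those endpoints gives the reverse inequalities; so $\partial_q P(\bar q,\bar{\bf m})=0$ and hence $\bar q=\Eqmb_n(\varphi)$ hold regardless of whether $\bar q$ is interior. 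This is precisely the observation used in the paper's derivation of \eqref{equality_q}.
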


\begin{rem} \label{remark_remtap_1rsb}
The case $n=1$ corresponds to the {\textit{REM-approximation}} of the TAP free energy which, as pointed out in the introduction, is the subject of study in \cite{REMTAP}.
The novelty of the present work and \cite{REMTAP} w.r.t to previous works (\cite{BK1, BK2}) is the nonlinearity of the functional \eqref{TAP_SK_functional_teor}. In fact, \eqref{TAP_SK_functional_teor} is continous, but not linear on $\mathcal{M}^+_1(\R^n)$; namely not of the form $\int \varphi d\nu$. 
Specifically in \eqref{parisi_grem_tap}, the nonlinearity causes the presence of the same correction term $\Phi(q)-q\Phi'(q)$, for any $n\in\N$. This term is zero in the classic / linear cases (e.g. for $\Phi$ the identity function) while reproducing, for the choices \eqref{specific_SK}, the unsettling quadratic (in the inverse temperature) appearing in the \textit{1RSB free energy} for the SK model. For more details on the comparison between the REM-approximation of TAP free energies and the 1RSB Parisi solution for the free energy of the SK, we refer the reader to Subsection 3.1 of \cite{REMTAP}. 
\end{rem}

In \cite{BK2} the authors focus on continous and linear functionals $\Phib$ and establish both the Boltzmann-Gibbs principle and the Parisi formula for the limiting free energy. Their proof of the Boltzmann-Gibbs principle, whose formulation is entirely analogous to that given in Theorem \ref{gibbs}, is rather powerful and does not require the linearity assumption. We refer to \cite[Theorem 2.3]{BK2} for a detailed proof, but give in Section \ref{section_proof_gibbs} a comprehensive outline for the reader's convenience. In the last Section \ref{section_proof_parisi} we prove Theorem \ref{parisi_grem_tap_teor}, namely we see how the finite dimensional Parisi principle encodes the optimal measure for the infinite dimensional Boltzmann-Gibbs principle.

\section{The Boltzmann-Gibbs principle: proof of Theorem \ref{gibbs}} \label{section_proof_gibbs}

Some infrastructure is necessary:  if a sequence of events $\left\{E_N\right\}_{N\in\N}\subset\mathcal{A}$ is such that there exists $\delta>0$ s.t. for $N$ large enough $\PP(E_N)\geq 1-e^{-\delta N}$ we say that $E_N$ holds with \textit{overwhelming probability} (o.p.). With $B_r(\nu)$ we denote the open ball centered in $\nu\in \mathcal{M}^+_1(\R^n)$ with radius $r>0$ in one of the standard metrics (e.g. Prokhorov's metric); a {\it{neighborhood}} or an {\it{open subset}} are conceived in the corresponding topology. This turns $\mathcal{M}^+_1(\R^{n})$ into a complete, separable metric space. 
The argument follows a second moment method performed on the counting variable
\beq \label{counting_GREM_tap_cav}
 M_N(U) := \# \{ \alpha\in \{1,\dots, 2^{\Theta N}\}: L_{N, \alpha} \in U \}\eeq
defined for every $U\subset \mathcal{M}^+_1(\R^n)$. Estimating its first moments on exponential scale through Sanov's Theorem, we can easily prove that for any $\e'>0$ 
\begin{itemize}
\item[$i)$] if $H(\nu \mid \mu) < +\infty$, then we can find an arbitrary small neighborhood $U$ of $\nu$:
\beq \label{Markov} M_N(U) \leq \exp{N[\Theta\log2 - H(\nu \mid \mu) +\e']}\,\, \text{with o.p.}\,;\eeq 
\item[$ii)$] if $H(\nu^{(k)} \mid \mu^{(k)}) > \Theta\log2 \sum_{j=1}^k \gamma_j $ for some $k\in\{1,\dots,n\}$, i.e. if $\nu\not\in \mathcal{K}$, then we can find an arbitrary small neighborhood $U$ of $\nu$:
\beq \label{marginal_cond} M_N(U) = 0 \,\, \text{with o.p.}\,.\eeq 
\end{itemize}
The statement $i)$ follows from a straightforward application of Markov's inequality; while in order to show $ii)$ we need to consider, given $\e'>0$ and a neighborhoods $U'$ of $\nu$, the subset
\beq U:=\{ \nu\in  \mathcal{M}^+_1(\R^n): \nu \in U', \nu^{(k)}\in U''\}\eeq 
where $U'' \subset  \mathcal{M}^+_1(\R^k)$ is such that
\beq \label{marginal_cond_1} \inf_{\nu'\in \text{cl}U''} H(\nu' \mid \mu^{(k)}) =  \inf_{\nu'\in U''} H(\nu' \mid \mu^{(k)}),\qquad \inf_{\nu'\in U''} H(\nu' \mid \mu^{(k)}) \geq \Theta \log2 \sum_{j=1}^k \gamma_j   + \eta.\eeq
for some $\eta>0$. Then, noticing that the marginal of $L_{N, \alpha}$ on the first $k$ coordinates only depend on the first $k$ entries of $\alpha$, to wit $ L_{N,\alpha}^{(k)} = N^{-1} \sum_{i\in[N]} \delta_{(X^1_{\alpha_1,i}, \dots, X^k_{\alpha_1,\dots,\alpha_k,i})}$; we get
\beq \bea 
\PP( \exists \alpha: L_{N,\alpha}\in U) &\leq \PP( \exists (\alpha_1,\dots, \alpha_k): L_{N,\alpha}^{(k)}\in U'') \leq 2^{\Theta \sum_{j=1}^k \gamma_j N} \PP(L_{N,1}^{(k)}\in U'').\\
\eea \eeq
The latter, together with Sanov's Theorem and \eqref{marginal_cond_1}, imply $ii)$. Now, as $\Phib$ is continous on $\mathcal{M}^+_1(\R^n)$ and $\mathcal{K}$ is compact, it is easy to see that $i)$, $ii)$, together with the Borel-Cantelli Lemma, imply the upper bound.
Specifically that there exists a finite covering of $\mathcal{K}$, say $\{ B_{r_l}(\nu_l)\}_{l=1}^M$, such that $\PP$-a.s.
\beq \label{upper_bound_GREM_TAP} \bea
\limsup\limits_{N\to \infty} f_{N}&\leq\limsup\limits_{N\to \infty} \frac{1}{N}\log\sum\limits_{l=1}^{M}\sum\limits_{\alpha: L_{N,\alpha}\in  B_{r_l}(\nu_l)}\exp \,N \Phib[L_{N,\alpha}]\\
&\leq \limsup\limits_{N\to \infty} \frac{1}{N}\log\sum\limits_{l=1}^{M}M_N(B_{r_l}(\nu_l))\exp \{\,N \Phib[\nu_l]+\e' N\}\\
&\leq\limsup\limits_{N\to \infty} \frac{1}{N}\log\sum\limits_{l=1}^{M}\exp \{\,N \Phib[\nu_l]+N\Theta\log2- NH(\nu_l\mid \mu)+2\e' N\}
\eea \eeq 
which, as each $\nu$ is in $\mathcal{K}$, implies the upper bound:
\beq \label{upper_grem_cav} \limsup_{N\to\infty} f_N \leq \sup_{\nu\in\mathcal{K}}\,\Phib[\nu]-H(\nu\mid\mu)+\Theta\log2 .\eeq
The key to the claim \eqref{gibbs}, i.e. to the lower bound, is now a concentration property of the counting variable \eqref{counting_GREM_tap_cav}. Specifically, by independence, for any $U\subset \mathcal{M}^+_1(\R^n)$ 
\beq \bea
\E \, M^2_N(U) &\leq [\E M_N(U)]^2+ \sum_{k=1}^n \sum_{\stackrel{\alpha, \alpha': \alpha_i = \alpha'_i \, \forall i\leq k}{ \alpha_{k+1}\neq \alpha'_{k+1}}} \PP( L_{N,\alpha}\in U, L_{N,\alpha'}\in U) 
\eea\eeq
so that a {\textit{relative}} version of Sanov Theorem (see \cite[Lemma 3.2]{BK2}) implies from the latter that if $H(\nu^{(k)} \mid \mu^{(k)}) < \Theta\log2 \sum_{j=1}^k \gamma_j$ for all $k\in\{1,\dots,n\}$, then for any small enough neighborhood $U$ of $\nu$:
\beq \mathbb{V}\text{ar} M_N(U)  \leq e^{-\delta N} [ \E M_N(U) ]^2 \quad\text{for some}\quad \delta > 0.\eeq
A classic application of Chebyshev inequality and the Borel-Cantelli Lemma now imply\footnote{For more details see the proof of \cite[Lemma 3.5]{BK2} and that of \cite[Proposition 3.2]{NHP}.}
\beq \label{concentration} \lim\limits_{N\to \infty}\frac{1}{N} \log M_N(U) =  \lim\limits_{N\to \infty}\frac{1}{N} \log \E M_N(U) = \Theta \log2 - H(\nu\mid\mu) \,.\eeq 
To see how \eqref{concentration} leads to the lower bound, first notice that $\mathcal{K}\subset  \mathcal{M}^+_1(\R^n)$ is compact and convex and $\nu\to {\bf \Phi}[\nu] - H(\nu\mid \mu)$ is upper-semicontinous and concave, therefore the Boltzmann-Gibbs principle \eqref{GREM_TAP_free_energy_teor} admits a unique solution. Indicating with $\nu^* \in \mathcal{K}$ the solution to \eqref{GREM_TAP_free_energy_teor}, consider the convex combination
\beq \nu_a := a\mu + (1-a)\nu^*\in\mathcal{K} \quad \text{for} \quad a\in(0,1). \eeq
One has that $\nu_a\to\nu^*$ weakly as $a\to0$, and $\Phib[\nu_a]\to\Phib[\nu^*]$, $H(\nu_a\mid\mu)\to H(\nu^*\mid\mu)$. Given $\e'>0$ we choose $a=a(\e')>0$, and a neighborhood $U$ of $\nu_a$:
\beq \Phib[\nu_a] - H(\nu_a\mid\mu) \geq \Phib[\nu^*] -  H(\nu^*\mid\mu) -\e',\eeq
\beq \Phib[\nu] \geq \Phib[\nu_a]-\e' \quad \forall \nu \in U.  \eeq
Then 
\beq \bea 
f_N &\geq \frac{1}{N}\log \sum_{\alpha: L_{N,\alpha} \in U} \exp{ N\Phib[L_{N,\alpha}]}\geq\Phib[\nu_a] - \e' + \frac{1}{N}\log M_N(U)
\eea \eeq
so that 
\beq \bea 
\liminf\limits_{N\to \infty} f_N &\geq \Phib[\nu_a] - H(\nu_a\mid\mu)+ \Theta \log2-\e' \geq \Phib[\nu^*] - H(\nu^*\mid\mu) + \Theta \log2-2\e',
\eea \eeq
which gives the lower bound. This, together with \eqref{upper_grem_cav}, proves Theorem \ref{gibbs}.

\section{The Parisi principle: proof of Theorem \ref{parisi_grem_tap_teor}} \label{section_proof_parisi}

In this Section we prove Theorem \ref{parisi_grem_tap_teor}. First, we show that the Parisi function \eqref{parisi_grem_tap} has a minum point on $[\inf \varphi, \sup\varphi]\times\Delta$. To this aim, notice that applying Jensen's inequality and Fubini's theorem $n$ times we get

$$\bea P_{0}^{\,q,{\bf m}} & = \frac{1}{m_1} \log \int \exp{m_1 P^{\,q,{\bf m}}_1(x_1)} \mu_1(dx_1) \geq \int P_1^{(q,{\bf m})}(x_1) \mu_1(dx_1)\\
& =  \int \frac{1}{m_2} \log \left( \int \exp{m_2 P_2(x_1, x_2)} \mu_2(dx_2) \right) \,  \mu_1(dx_1)\geq \int P_2^{\,q,{\bf m}}(x_1, x_2) \mu_1\otimes\mu_2(dx_1,dx_2) \\
& \geq \int P^{\,q,{\bf m}}_n({\bf x}) \mu(d{\bf x}) = \int \,   \Phi'(q) \varphi({\bf x}) + f({\bf x}) \, \mu(d{\bf x});\\
\eea$$
so that
$$ \bea 
P(q,{\bf m}) &= \Phi(q)-q\Phi'(q) + \Pqm_0 + \sum_{k=1}^n \frac{\gamma_k \Theta\log2}{m_k} - \Theta\log2\\
& \geq  \Phi(q)-q\Phi'(q) +  \sup_{{\bf x}\in\R^n}\mid \Phi'(q)\varphi({\bf x})+f({\bf x}) \mid + \sum_{k=1}^n \frac{\gamma_k \Theta\log2}{m_k} - \Theta\log2,
\eea $$
which implies that $P(q,{\bf m}) \to +\infty$ as $(q,{\bf m})$ tends in $[\inf \varphi, \sup\varphi]\times\Delta\subset \R^{n+1}$ to any point $(q',0, m'_2, \cdots, m'_n)$. The infimum of $P$ on $[\inf \varphi, \sup\varphi]\times\Delta$, where we recall $\Delta = \{{\bf m}=(m_1,\dots,m_n)\in[0,1]^n: 0< m_1\leq\dots\leq m_n\leq 1\}$, must be therefore equal to the infimum of $P$ on $[\inf \varphi, \sup\varphi]\times \{{\bf m}\in\Delta: m_1\geq \epsilon\}$, for some $\epsilon>0$.\\
Specifically, as $P$ achieves minimum on the compact set $[\inf \varphi, \sup\varphi]\times \{{\bf m}\in\Delta: m_1\geq \epsilon\}$, being there continous; it must achieve minimum on $[\inf \varphi, \sup \varphi]\times\Delta$. To wit
$$\text{argmin}_{[\inf \varphi, \sup\varphi]\times \Delta}P \neq \emptyset,$$ 
and we can state the following Lemma, whose simple proof shows how the Parisi functional enables  finding the solution to the Boltzmann-Gibbs principle.

\begin{lem} \label{lem_derivatives}
Let $\bar q\in \R, \bar{{\bf m}}=(\bar m_j)_{j=1}^n\in \R^n$ be s.t. $(\bar q, {\bf \bar m})\in \text{argmin}_{[\inf \varphi, \sup\varphi]\times \Delta}P$, and consider the probability measure $G=G^{\bar q, {\bar {\bf m}}}\in \mathcal{M}^+_1(\R^n)$ whose Radon-Nikodym derivative w.r.t. $\mu=\bigotimes_{j=1}^n \mu_j $ is
\beq \label{radon_candid_n}
\frac{dG}{d\mu}(x_1,\dots,x_n):=\prod_{j=1}^n \frac{e^{ \bar m_{j}\Pqmb_j(x_1,\dots, x_j) }}{\int e^{ \bar m_{j} \Pqmb_j(x_1,\dots, x_{j-1},x'_j)} \mu_j(dx'_j)}.
\eeq
Then $G\in \mathcal{K}$ and
\beq \label{upper_bound}
\Phib[\nu]- H(\nu\mid\mu) \leq \Phib[G]-  H(G\mid\mu)
\eeq
holds $\forall \nu \in \mathcal{K}$.
\vskip0.2cm
\end{lem}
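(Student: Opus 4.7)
The plan is to adapt a Guerra-style interpolation argument to this abstract GREM-TAP setting. The concavity hypothesis $\Phi''<0$ on $[\inf\varphi,\sup\varphi]$ provides the tangent-line bound $\Phi(\int \varphi\,d\nu) \leq \Phi(\bar q) - \bar q\,\Phi'(\bar q) + \Phi'(\bar q)\int \varphi\,d\nu$, which upon adding $\int f\,d\nu$ becomes
$$\Phi[\nu] \leq \Phi(\bar q)-\bar q\Phi'(\bar q) + \int P_n^{\bar q,\bar{\bf m}}\,d\nu,$$
since $P_n^{\bar q,\bar{\bf m}}({\bf x}) = \Phi'(\bar q)\varphi({\bf x}) + f({\bf x})$ is the seed of the recursion \eqref{recursive_Par}. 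This reduces the lemma to bounding $\int P_n^{\bar q,\bar{\bf m}}\,d\nu - H(\nu\mid\mu)$ uniformly for $\nu \in \mathcal K$, together with showing that equality is attained at $\nu=G$.

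For the uniform bound we use that the recursion \eqref{recursive_Par} is engineered so that $\bar m_j P_{j-1}^{\bar q,\bar{\bf m}} = \log \int e^{\bar m_j P_j^{\bar q,\bar{\bf m}}}\,d\mu_j$. Writing $h_j^\nu \defeq H(\nu^{(j)}\mid\mu^{(j)})$, the Donsker-Varadhan variational inequality applied level by level to the conditional distributions of $\nu$ yields
$$\int P_j^{\bar q,\bar{\bf m}}\,d\nu^{(j)} \leq \int P_{j-1}^{\bar q,\bar{\bf m}}\,d\nu^{(j-1)} + \frac{1}{\bar m_j}(h_j^\nu - h_{j-1}^\nu).$$
Telescoping in $j=1,\dots,n$ and subtracting $h_n^\nu = H(\nu\mid\mu)$ yields $\int P_n^{\bar q,\bar{\bf m}}\,d\nu - H(\nu\mid\mu) \leq P_0^{\bar q,\bar{\bf m}} + \sum_k (\bar m_k^{-1} - 1)(h_k^\nu - h_{k-1}^\nu)$. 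A summation by parts, using the monotonicity $0<\bar m_1\leq\dots\leq\bar m_n\leq 1$ (which makes all Abel coefficients $\bar m_k^{-1}-\bar m_{k+1}^{-1}$ and $\bar m_n^{-1}-1$ non-negative) together with the $\mathcal K$-constraint $h_k^\nu \leq \Theta\log 2\sum_{j\leq k}\gamma_j$, upgrades this to $P_0^{\bar q,\bar{\bf m}} + \sum_k \gamma_k\Theta\log 2/\bar m_k - \Theta\log 2$, i.e., to $P(\bar q,\bar{\bf m}) - [\Phi(\bar q)-\bar q\Phi'(\bar q)]$. Combined with the tangent-line bound this yields $\Phi[\nu] - H(\nu\mid\mu) \leq P(\bar q,\bar{\bf m})$ for every $\nu\in\mathcal K$.

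It remains to show that equality is achieved at $\nu=G$, i.e., $\Phi[G] - H(G\mid\mu) = P(\bar q,\bar{\bf m})$, and hence in particular $G\in\mathcal K$. By construction \eqref{radon_candid_n} the $j$-th conditional of $G$ is precisely the Gibbs measure against $\bar m_j P_j^{\bar q,\bar{\bf m}}$, so the Donsker-Varadhan inequality is tight at $\nu=G$ at every level. Tightness of the tangent-line bound requires $\int\varphi\,dG=\bar q$: a chain-rule computation through the recursion gives $\partial_q P_0^{q,{\bf m}} = \Phi''(q)\int \varphi\,dG^{q,{\bf m}}$, so $\partial_q P = \Phi''(q)(\int\varphi\,dG-q)$, and $\Phi''(\bar q)\neq 0$ combined with the one-sided KKT inequality in $q$ forces the identity. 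Finally, the marginal saturation $h_k^G = \Theta\log 2\sum_{j\leq k}\gamma_j$, which simultaneously establishes $G\in\mathcal K$ and equality in the Abel step, should follow from $\partial_{m_k}P(\bar q,\bar{\bf m})=0$ via an analogous chain-rule computation expressing $\partial_{m_k}P_0^{q,{\bf m}}$ through the $G^{q,{\bf m}}$-expectation of $P_k^{q,{\bf m}} - P_{k-1}^{q,{\bf m}}$.

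The main obstacle we anticipate is the boundary of the simplex $\Delta$: if $\bar m_k=\bar m_{k+1}$ for some $k$ or $\bar m_n=1$, individual stationarity in $m_k$ fails and KKT only delivers inequalities, so the marginal saturation $h_k^G = \Theta\log 2\sum_{j\leq k}\gamma_j$ may break at those indices. We expect to handle this by noting that the Abel coefficients $\bar m_k^{-1}-\bar m_{k+1}^{-1}$ (respectively $\bar m_n^{-1}-1$) vanish precisely on the boundary components where saturation is unavailable, so the closing equality $\Phi[G]-H(G\mid\mu) = P(\bar q,\bar{\bf m})$ still goes through, while the constraint $h_k^G \leq \Theta\log 2\sum_{j\leq k}\gamma_j$ on such strata can be verified directly from the merged-level structure of the Parisi recursion.
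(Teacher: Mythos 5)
Your proposal follows the same essential strategy as the paper: the tangent-line bound from concavity of $\Phi$, a level-by-level Donsker--Varadhan inequality applied to the conditional kernels of $\nu$, and the first-order conditions of $P$ at the minimizer to pin down the structure of $G$ (in particular $\bar q = \int\varphi\,dG$, which the paper derives identically from $\partial_q P = \Phi''(q)(\Eqm_n(\varphi)-q)$ and strict concavity). The organizational difference is real but mild: you factor the comparison through the Parisi value, proving $\Phib[\nu]-H(\nu\mid\mu)\leq P(\bar q,\bar{\bf m})$ for all $\nu\in\mathcal K$ by an Abel summation and then arguing tightness at $\nu=G$, whereas the paper introduces intermediate quantities $D_j$ interpolating directly between $\Phib[\nu]-H(\nu\mid\mu)$ and $\Phib[G]-H(G\mid\mu)$ level by level and proves $D_1\geq 0$, $D_{k}\geq D_{k-1}$. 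Both arrangements rest on the same three pillars, and your Abel bookkeeping is correct (the coefficients $\bar m_k^{-1}-\bar m_{k+1}^{-1}$ and $\bar m_n^{-1}-1$ are non-negative by monotonicity, and $\sum_k a_k\gamma_k$ reassembles into $\sum_k \gamma_k\Theta\log 2/\bar m_k - \Theta\log 2$).

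The gap is in the $G\in\mathcal K$ claim, and it is not small. First, the parenthetical ``and hence in particular $G\in\mathcal K$'' is a non-sequitur: your inequality $\Phib[\nu]-H(\nu\mid\mu)\leq P(\bar q,\bar{\bf m})$ holds only \emph{for} $\nu\in\mathcal K$, so establishing the equality at $G$ carries no information about whether $G$ satisfies the entropy constraints. You then correctly pivot to deriving the marginal saturations $h_k^G = \Theta\log 2\sum_{j\leq k}\gamma_j$ from $\partial_{m_k}P=0$, but at any $k$ with $\bar m_{k-1}=\bar m_k$ or $\bar m_k=\bar m_{k+1}$ (or $\bar m_n=1$) only one-sided KKT conditions are available, and you defer to ``the merged-level structure of the Parisi recursion.'' That deferral is precisely where the substance lies. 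The paper closes it with an induction on $j$: for each $j$ it considers the single-variable map $m\mapsto P(\bar q,\bar m_1,\dots,\bar m_{\min_j-1},m,\dots,m,\bar m_{j+1},\dots,\bar m_n)$ over the partial block $\{\min_j,\dots,j\}$, extracts the one-sided condition $\sum_{k=\min_j}^j\partial_{m_k}P(\bar q,\bar{\bf m})\leq 0$, converts it via \eqref{P_gradient} into $\sum_{k=\min_j}^j\Eqmb_{k-1}(\Hqmb_k)\leq \Theta\log 2\sum_{k=\min_j}^j\gamma_k$, and then glues across blocks with the chain rule $H(G^{(j)}\mid\mu^{(j)})=H(G^{(\min_j-1)}\mid\mu^{(\min_j-1)})+\sum_{k=\min_j}^j\Eqmb_{k-1}(\Hqmb_k)$ together with the inductive hypothesis at $\min_j-1<j$. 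Moreover, the block-wise saturation $h_{\max_k}^G=\Theta\log 2\sum_{j\leq\max_k}\gamma_j$ that makes your Abel step tight is itself only obtained this way (it is not a single first-order condition but a telescoped sum of them). Without spelling this induction out, neither $G\in\mathcal K$ at indices strictly inside a block nor the tightness at block endpoints is actually proved — you have correctly located the obstruction but not removed it.
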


\begin{proof}

For ${\bf x}=(x_1, \dots, x_n)$ a generic point in $\R^n$ and every $j\in\{1,\cdots, n\}$, we write $\xj := (x_1,\dots, x_j)$ for its projection on the first $j$ coordinates and shorten
$$P_j(\xj) := P_j^{\, \bar q, {\bar{\bf m}}}(\xj).$$
We also set for every couple $(q, {\bf m}) \in [\inf \varphi, \sup \varphi] \times \Delta$ and given a $\mu^{(j)}-$measurable function $v:S^j\to \R$

\beq \label{exp_marginal_j} \bea 
\Eqm_j( v )& :=  \int v(\xj) 
\frac{e^{ m_1\Pqm_1(x_1) }}{\int e^{ m_1 \Pqm_1(x'_1)} \mu_1(dx_1')  } \dots \frac{e^{ m_j\Pqm_j(\xj) }}{\int e^{ m_j \Pqm_j({\bf x}_{j-1}, x_j')} \mu_j(dx'_j)} \mu^{(j)}(d{\bf x}_j)\\
\eea \eeq

and 

\beq \label{rel_en_condjm1_onj} 
\bea
\Hqm_j({\bf x}_{j-1}) & :=  \int \log 
\left(\frac{e^{ m_j\Pqm_j(\xj) }}{\int e^{ m_j\Pqm_j({\bf x}_{j-1},x'_j) } \mu_j(dx'_j)}\right)  \,\frac{e^{ m_j\Pqm_j(\xj)  }}{\int e^{ m_j\Pqm_j({\bf x}_{j-1}, x'_j)} \mu_j(dx'_j)}\mu_j(dx_j),
\eea \eeq

with $\Hqm_1$ dependent only on $(q,{\bf m})$. It is readily checked that if $(q, {\bf m}) = (\bar q, {\bf \bar m})$ then $\Eqmb_j$ is the expectiation w.r.t the marginal $G^{(j)}$ of the measure $G$, and that therefore the relation
\beq \label{relation_E_Gj}
\Eqmb_j( v ) = \int v(\xj) \, G^{(j)}(d\xj)
\eeq
holds for every $j\in\{1,\cdots, n\}$ and every $\mu^{(j)}-$measurable function $v$. Upon closer inspection, it also becomes evident that 
\beq \label{relation_H_HG}
\Hqmb_1=H(G^{(1)}\mid \mu_1)
\eeq
and that if $j\geq 2$ and $(X_1,\dots, X_j)$ is a random vector distributed according to $G^{(j)}$, then $\Hqm_j({\bf x}_{j-1})$ is the Kullback-Leibler divergence between $\mu^{(j)}$ and the conditional probability distribution of $X_j$ given $X_1=x_1, \dots, X_{j-1}=x_{j-1}$.

This notation enables us to express partial derivatives of $P$ in a compact and convenient manner. Specifically, a straightforward computation shows that, if $\partial_q P$ is the partial derivative of $P$ w.r.t the variable $q$ and, for every $j\in\{1,\cdots, n\}$, $\partial_{m_j}P$ is the one w.r.t to the variable $m_j$; then 
\beq  \label{P_gradient}
\begin{cases}
\partial_q P(q,{\bf m}) =  \Phi''(q) \left[ \Eqm_n( \varphi ) - q \right] \\\\
\partial_{m_1} P(q,{\bf m}) =  \frac{1}{m_1^2}\left[   \Hqm_1 - \gamma_1\Theta\log2 \right]\\\\
\partial_{m_j} P(q,{\bf m}) = \frac{1}{m^2_j}\left[  \Eqm_{j-1}(\Hqm_j)- \gamma_j\Theta\log2 \right]\quad \forall j\in \{2,\cdots, n\}.
\end{cases}
\eeq
Through the first of these relations, the minimality of $\bar q$ for the univariate function $q \to P(q, {\bf \bar m})$ on  $[\inf \varphi, \sup \varphi]$, together with the fact that $\Phi''$ is strictly negative on $[\inf \varphi, \sup\varphi]$; imply
\beq \label{equality_q}\bar q  = \Eqmb_n( \varphi ) \stackrel{\eqref{relation_E_Gj} \, \text{with} \, j\equiv n}{=} \,\,\,  \int \varphi({\bf x}) G(d{\bf x}).\eeq

Similarly, the minimality of $\bar m_1\neq 0$ for the univariate function $m_1 \to P(\bar q, m_1, \bar{m}_2,\dots, \bar m_n)$ on $(0, \bar m_2]$
implies that $\partial_{m_1} P(\bar q,{\bf \bar m}) \leq 0$, i.e.
\beq \label{base_case_G_in_K} 
\gamma_1 \Theta \log 2 \geq \Hqmb_1 \,\,\, \stackrel{\eqref{relation_H_HG}}{=}  H(G^{(1)}\mid \mu_1) ;
\eeq
which amounts to say that the probability measure $G$ satisfies the {\textit{ first}} entropic condition appearing in the definition \eqref{set_K} of the set $\K$.\\
Given \eqref{base_case_G_in_K} as a base case, the fact that the couple $G$ actually satisfy {\textit{all}} the entropic conditions, i.e. that 
\beq \label{claim_induction_grem_tap} H(G^{(j)}\mid\mu^{(j)})\leq \Theta\log2 \sum_{k\in[j]}\gamma_k \eeq
for every $j\in\{1,\cdots, n\}$, and therefore $G\in\K$; can be easily proved by induction. To see how this comes about, consider $j\in\{2,\cdots,n\}$ and assume that for any $k<j$
\beq \label{inductive_hp_grem_tap} H(G^{(k)}\mid\mu^{(k)}) \leq \Theta\log2 \sum_{l\in[k]}\gamma_l \eeq
i.e. the inductive hypothesis.
Then consider
$$\text{min}_j := \min\{ k: \quad 1\leq k\leq j: \quad \bar m_k = \bar m_j \}, $$
and, setting $\bar m_0 := 0, \bar m_{n+1}:=1$, the interval $(\bar m_{\text{min}_j-1}, \bar m_{j+1} ]$. Notice that
one has $0\!\!<\!\!\bar m_1\!\leq\!\bar m_k \,\, \forall k\leq n$, so that the interval $(\bar m_{\text{min}_j-1}, \bar m_{j+1} ]$ is non-empty and we can therefore define
the real function $m\in  (\bar m_{\text{min}_j-1}, \bar m_{j+1} ] \to g_j(m)$ given by
\beq \label{block_function} g_j(m):= \begin{cases}
P(\bar q, m, \dots, m) & \text{if} \,\, \min_{j}=1, \, j=n \,;\\
P(\bar q,  \bar m_1, \dots, \bar m_{\text{min}_j-1},m, \dots, m) & \text{if} \,\, \min_{j}>1, \, j=n \,;\\
P(\bar q, m, \dots, m, \bar m_{j+1},\dots, \bar m_n) & \text{if} \,\, \min_{j}=1, \, j<n \,;\\
P(\bar q, \bar m_1, \dots, \bar m_{\text{min}_j-1},m, \dots, m, \bar m_{j+1},\dots, \bar m_n) & \text{if} \,\, \min_{j}>1, \, j<n\,.\\
\end{cases} \eeq
By construction, $g_j$ must have a minimum point in $m=\bar m_j \in (\bar m_{\text{min}_j-1}, \bar m_{j+1} ]$ which is a critical point for $g_j$ if $\bar m_j \neq \bar m_{j+1}$ while, possibly, such that $g'_j(\bar m_j) < 0$ if $\bar m_j = \bar m_{j+1}$.\\
As $g'_j(\bar m_j)=\sum\limits_{k=\text{min}_j}^j \partial_{m_k} P(\bar q, {\bf{\bar m}})$, this implies that it must hold $\sum\limits_{k=\text{min}_j}^j \partial_{m_k} P(\bar q, {\bf{\bar m}}) \leq 0$ in any case.\\
In order to include the case $\text{min}_j\equiv1$ we extend our notation setting $\Eqmb_0(\Hqmb_1)\equiv \Hqmb_1$ and $H(G^{(0)}\mid \mu^{(0)} )\equiv0$. Specifically, through the relations \eqref{P_gradient}, we obtain
\beq \label{derivative_to_rel_entr} \bea \sum\limits_{k=\text{min}_j}^j \partial_{m_k} P(\bar q, {\bf{\bar m}})&=\sum\limits_{k=\text{min}_j}^j  \frac{1}{\bar m^2_k}\left[ \Eqmb_{k-1}(\Hqmb_k) - \Theta\gamma_k\log2 \right]\\
&=\frac{1}{\bar m^2_j}\sum\limits_{k=\text{min}_j}^j\left[ \Eqmb_{k-1}(\Hqmb_k)- \Theta\gamma_k\log2 \right] \leq 0.\eea \eeq
The well known {\textit{chain-rule}} for Kullback-Leibler divergences is now basically everything we need. With our notation, it is equivalent to the fact that for each $k$ 
\beq \label{chain_rule_G} H(G^{(k)}\mid \mu^{(k)} ) = H(G^{(k-1)}\mid \mu^{(k-1)} ) + \Eqmb_{k-1}(\Hqmb_k),\eeq
which in turn implies that 
$$H(G^{(j)}\mid \mu^{(j)} )  = H(G^{(\text{min}_j-1)}\mid \mu^{(\text{min}_j-1)} )+ \sum_{k=\text{min}_j}^j  \Eqmb_{k-1}(\Hqmb_k).$$

Using the latter together with the last inequality in \eqref{derivative_to_rel_entr}, we get
\beq \bea & H(G^{(j)}\mid\mu^{(j)})\leq  H(G^{(\text{min}_j-1)}\mid \mu^{(\text{min}_j-1)} )+ \Theta\log2 \sum\limits_{k=\text{min}_j}^j\gamma_k 
\eea \eeq
so that the induction step follows using the inductive hypotesis \eqref{inductive_hp_grem_tap} for $k\equiv \text{min}_j-1 < j$. All in all, we proved that \eqref{claim_induction_grem_tap} holds for every $j\in\{1,\cdots n\}$, therefore that $G\in \mathcal{K}$.

\vskip0.5cm

We now prove \eqref{upper_bound}, i.e. our claim is that for any fixed  $\nu \in \mathcal K$
\beq \label{claim_fe}
\Phib[G ] -  H(G\mid \mu) -  \Phib[ \nu] + H(\nu \mid \mu) \geq 0.
\eeq
In order to lighten notation, we set
$$\bea 
& D_n :=\Phib[ G] -  H(G\mid \mu) -  \Phib[ \nu] +H(\nu\mid \mu);\\
& D_j :=  \int P_j\, dG^{(j)} - \frac{1}{\bar m_{j+1}} H(G^{(j)}\mid \mu^{(j)})-  \int P_j \,d\nu^{(j)} +  \frac{1}{\bar m_{j+1}} H( \nu^{(j)}\mid \mu^{(j)} ), \quad j\in \{1,\cdots,n-1\}.  
\eea $$
We will prove that $D_1 \geq 0$ and that $D_{k} \geq D_{k-1}$  for any $k\in \{2,\cdots, n\}$, which implies $D_n \geq 0$, namely the claim \eqref{claim_fe}. Both proofs follow a slight extension of the argument we used to show that $G\in\mK$.
We start by proving that $D_1 \geq 0$.\\
First notice that, considering the two possible cases $\bar m_1 = \bar m_2$ and $\bar m_1 < \bar m_2$, we have:
\begin{enumerate}
\item if $\bar m_1 = \bar m_2$ then 
\beq \label{D1_eq}
\bea D_1 = & \int P_1\, dG^{(1)} - \frac{1}{\bar m_{1}} H(G^{(1)}\mid \mu_1)- \int P_1 \,d\nu^{(1)}+  \frac{1}{\bar m_{1}} H\left( \nu^{(1)}\mid\mu_1\right)\,;
\eea\eeq
\item if $\bar m_1 < \bar m_2$ (as $\bar m_1 \neq 0$) the partial derivative $\partial_{m_1}P$ must vanish when evaluated in $(\bar q, {\bf{\bar m}})$, and this implies $H(G^{(1)}\mid \mu_1)=\gamma_1\Theta\log2$.\\ Therefore, since $\nu\in \K \Rightarrow H(\nu^{(1)}\mid \mu_1)\leq \gamma_1\Theta\log2$, it holds 
$$H(G^{(1)}\mid \mu_1)-  H(\nu^{(1)}\mid \mu_1)\geq 0,$$
and hence
\beq \label{D1_pre_ineq} \frac{1}{\bar m_1}\left[H(G^{(1)}\mid \mu_1) - H(\nu^{(1)}\mid \mu_1)\right]\geq  \frac{1}{\bar m_2}\left[H(G^{(1)}\mid \mu_1) -  H(\nu^{(1)}\mid \mu_1)\right]. \eeq
\end{enumerate}
It follows from \eqref{D1_eq}, \eqref{D1_pre_ineq} that in both cases it holds
\beq \label{ineq_D1}
\bea 
D_1 & \geq \int P_1\, dG^{(1)}  - \frac{1}{\bar m_{1}} H(G^{(1)}\mid \mu_1)- \int P_1 \,d\nu^{(1)} +  \frac{1}{\bar m_{1}} H(\nu^{(1)}\mid \mu_1).
\eea \eeq
Therefore, using the standard definition of relative entropy \eqref{def_rel_ent} to express $H(G^{(1)}\mid \mu_1)$, we get
\beq \bea
D_1 
&\geq \frac{1}{\bar m_1}\log \int \exp{\left( m_1 P_1 \right)} d\mu_1 - \int P_1 \,d\nu^{(1)} +  \frac{1}{\bar m_{1}} H(\nu^{(1)}\mid \mu_1)\\
& =  \frac{1}{\bar m_1}\left\{ H\left( \nu^{(1)}\mid \mu_1 \right) - \left[  \int \bar m_1 P_1 d\nu^{(1)}  - \log \int \exp{\left( m_1 P_1 \right)} d\mu_1 \right]\right\} \geq 0
\eea \eeq
where in the last step we used the Donsker and Varadhan variational formula for relative entropies 
$$H(\nu\mid \mu) = \sup_{f\in C_b} \{ \int f d\nu - \log \int e^f d\mu\}$$ (see \cite{DV,SYL} for details) and the fact that the hypotesis $\varphi, f \in C_b(\R^n)$ imply that $P_j \in C_b(S^j)$ for every $j\in\{1,\cdots, n\}$.\\
Now, we only need to prove that $D_{k} \geq D_{k-1}$  for any $k\in \{1,\cdots, n\}$.\\
Notice that if $K\in \{1,\cdots, n\}$ is the total number of different values assumed by the (positive and non-decreasing w.r.t the indices) entries of ${\bar {\bf m}}=(\bar m_1,\dots, \bar m_n)$; setting $\text{max}_0 := 0$, we can define for every $k\in \{1,\cdots, K\}$
\beq \text{max}_{k} := \max\{j:\quad  \text{max}_{k-1} + 1 \leq j \leq n, \,\,\, \bar m_j = \bar m_{\text{max}_{k-1}+1}\}. \eeq
To wit: for any $k\in \{1,\cdots, K\}$ the index $\text{max}_k$ is the last one for which the corresponding entry in ${\bf \bar m}$ assumes the $k$-th higher value in the whole vector: we will refer to the largest sections of the vector ${\bf \bar m}$ where all entries are equal as \textit{blocks}. By construction, $j=\text{max}_k$ for some $k$ iff \ $\bar m_{j} < \bar m_{j+1}$.\\
As in the case $1)$ above, if $\bar m_n = 1$ then 
\beq \label{first_ineq_Dn}
D_n \geq \Phib[ G ]  - \frac{1}{\bar m_n} H(G\mid \mu) -  \Phib[ \nu] + \frac{1}{\bar m_n} H(\nu \mid \mu);\\
\eeq 
holds with equality. If $\bar m_n < 1$ then, considering the real function $m\in (0, \bar m_{\text{max}_1+1}] \to g_{\text{max}_1}(m)$ defined by \eqref{block_function} for $j\equiv \text{max}_1$, to wit
$$g_{\text{max}_1}(m)= \begin{cases}
P(\bar q, m, \dots, m) & \text{if} \,\, \max_{1}=n \,,\\
P(\bar q, m, \dots, m, \bar m_{\text{max}_1+1}, \dots, \bar m_n)  & \text{if} \,\, \max_{1}<n\,;
\end{cases}$$
we notice that it must have a minimum point in $m=\bar m_1\in(0,\bar m_{max_1+1})$, specifically for $m\neq \bar m_{\text{max}_1+1}$ (recall that we have set $\bar m_{n+1}\equiv 1$). 
This implies that $\bar m_1$ must be a critical point of $g_{\text{max}_1}$, i.e. that $\sum\limits_{k=1}^{\text{max}_1} \partial_{m_k} P(\bar q, {\bf \bar m})=0$, which implies
\beq \label{1_max2_zero}
H(G^{(\text{max}_1)}\mid \mu^{(\text{max}_1)})=\Theta \log2 \sum\limits_{k=1}^{\text{max}_1} \gamma_k.
\eeq 
If now $K=1$, i.e. if ${\bf m}$ is made of one block and $\text{max}_1=n$, the latter is exactly $H(G\mid\mu)=\Theta\log2$. If instead $K\geq 2$, i.e. if $\bar {\bf m}$ is made of at least two blocks and $\text{max}_1<n$, we easily get, as above, that $\sum_{k=\text{max}_1+1}^{\text{max}_2} \partial_{m_k}P(\bar q, \bar {\bf m})=0$, which implies 
\beq \label{max12_zero}
\sum_{k=\text{max}_1+1}^{\text{max}_2}\Eqmb_{k-1}(\Hqmb_k)=\Theta \log2\sum_{k=\text{max}_1+1}^{\text{max}_2}\gamma_k.
\eeq
Through the chain-rule, \eqref{1_max2_zero} and \eqref{max12_zero}, we get
\beq \bea 
H(G\mid\mu) = & H(G^{(\text{max}_1)}\mid \mu^{(\text{max}_1)}) +  \sum_{k=\text{max}_1+1}^n \Eqmb_{k-1}(\Hqmb_k) \\
= & \Theta \log2 \sum\limits_{k=1}^{\text{max}_1} \gamma_k +  \Theta \log2 \sum\limits_{k=\text{max}_1+1}^{\text{max}_2} \gamma_k +  \sum_{k=\text{max}_2+1}^n \Eqmb_{k-1}(\Hqmb_k).
\eea \eeq
It should now be clear to the reader that by repeating the same reasoning $K$ times, since $\text{max}_K\equiv n$, we arrive at
\beq \bea 
H(G\mid\mu) = & \Theta \log2 \sum\limits_{j=0}^K \sum\limits_{k=\text{max}_j+1}^{\text{max}_{j+1}} \gamma_k =\Theta \log2.
\eea \eeq

Specifically, as before,
the inequality \eqref{first_ineq_Dn} holds also if $\bar m_n < 1$. From \eqref{chain_rule_G} and some elementary manipulations one finds 
\beq \bea 
H(G\mid\mu) & = H(G^{(n-1)}\mid \mu^{(n-1)}) + \Eqmb_{n-1}(\Hqmb_n) \\
& = H(G^{(n-1)}\mid \mu^{(n-1)}) + \bar m_n \int P_n \, dG  - \bar m_n \int P_{n-1} \, dG^{(n-1)} \\
& = H(G^{(n-1)}\mid \mu^{(n-1)}) + \bar m_n \Phi'(\bar q) \int \varphi \, dG + \bar m_n \int f \, dG   - \bar m_n \int P_{n-1} \, dG^{(n-1)}. \\
\eea \eeq
Through the latter, we get
\beq \bea 
\frac{1}{\bar m_n} H(G\mid \mu) =& \frac{1}{\bar m_n} H(G^{(n-1)}\mid\mu^{(n-1)}) + \Phi'(\bar q)\int \varphi dG+\int f dG - \int P_{n-1} dG^{(n-1)}
\eea \eeq
Plugging the latter into \eqref{first_ineq_Dn}, and recalling the explicit expression \eqref{specific_functional} for $\Phib$, we find
\beq \label{second_ineq_Dn} \bea 
D_n & \geq \Phi\left(  \int \varphi dG  \right) - \Phi\left( \int \varphi d\nu \right) -\Phi'(\bar q)\int \varphi \, dG - \int f d\nu \, +\\
&\qquad\qquad  + \frac{1}{\bar m_n} H(\nu\mid \mu) -\frac{1}{\bar m_n}H(G^{(n-1)}\mid\mu^{(n-1)}) +\int P_{n-1} \, dG^{(n-1)}  \\
&\geq\frac{1}{\bar m_n} H(\nu\mid\mu) - \Phi'(\bar q)\int \varphi d\nu  - \int f d\nu -\frac{1}{\bar m_n}H(G^{(n-1)}\mid \mu^{(n-1)}) + \int P_{n-1} \, dG^{(n-1)} \\
\eea \eeq 
the second inequality by concavity of $\Phi$, using that $\bar q = \int \varphi \, dG$.\\
Recall now that a chain-rule analogous to \eqref{chain_rule_G} holds for every measure $\nu\in \K$. Specifically, if $(X_1,\dots, X_n)$ is a random vector distributed according to some $\nu\in\ps{n}$ and ${\bf x}_{n-1}=(x_1,\dots, x_{n-1})\in \R^{n-1}$, 
indicating with  $K_\nu^{({\bf x}_{n-1})}\in \psu$ the regular conditional probability distribution of $X_n$ given $X_1=x_1, \dots, X_{n-1}=x_{n-1}$ 
the following holds
$$\bea
H(\nu\mid\mu)&
&= H(\nu^{(n-1)}\mid\mu^{(n-1)}) + \int \,\,H(K_\nu^{({\bf x}_{n-1})}\mid \mu_n) \,\, \nu^{(n-1)}(d{\bf x}_{n-1}).\eea $$
Using this formula we get
\beq \label{rel_en_from_n_to_n-1} \bea 
& H(\nu\mid\mu) - \int \,\, \bar m_n \left[ \Phi'(\bar q) \varphi({\bf x})+ f({\bf x})\right] \,\, \nu(d{\bf x}) + \bar m_n \int P_{n-1}({\bf x}_{n-1}) \,\, \nu^{(n-1)} (d{\bf x}_{n-1})\\
& =  H(\nu^{(n-1)}\mid\mu^{(n-1)}) + \int \,\,H(K_\nu^{({\bf x}_{n-1})}\mid \mu_n) \nu^{(n-1)}(d{\bf x}_{n-1}) \, +\\
& \quad  -\int \left\{ \int \bar m_n \left[ \Phi'(\bar q) \varphi({\bf x}) + f({\bf x}) \right] K_\nu^{({\bf x}_{n-1})}(dx_n) - \bar m_n P_{n-1}({\bf x}_{n-1}) \right\} \,\, \nu^{(n-1)}(d{\bf x}_{n-1}) \\
& = H(\nu^{(n-1)}\mid\mu^{(n-1)}) + \int \,\,H(K_\nu^{({\bf x}_{n-1})}\mid \mu_n) \nu^{(n-1)}(d{\bf x}_{n-1}) \, +\\
& \quad  -\int \left\{ \int \bar m_n \left[ \Phi'(\bar q) \varphi({\bf x}) + f({\bf x}) \right] K_\nu^{({\bf x}_{n-1})}(dx_n) -\log \int e^{\bar m_n P_n({\bf x})} \mu_n(dx_n) \right\} \nu^{(n-1)}(d{\bf x}_{n-1}).\\
\eea \eeq
Now, the Donsker and Varadhan variational formula for relative entropies implies that for every fixed ${\bf x}_{n-1} \in \R^{n-1}$
$$\bea 
H(K_\nu^{({\bf x}_{n-1})}\mid \mu_n) &
\geq  \int \bar m_n\left[ \Phi'(\bar q) \varphi({\bf x}) + f({\bf x}) \right] K_\nu^{({\bf x}_{n-1})}(dx_n) -  \log \int e^{ \bar m_n\left[ \Phi'(\bar q) \varphi({\bf x}) + f({\bf x}) \right] } \mu_n(dx_n) 
\eea$$
which, as $P_{n}({\bf x}) = \Phi'(\bar q) \varphi({\bf x}) + f({\bf x}) $, implies through \eqref{rel_en_from_n_to_n-1} that
\beq \bea 
\frac{1}{\bar m_n}H(\nu\mid\mu) - \int \,\, \left[ \Phi'(\bar q) \varphi+ f \right] \,\, d\nu \geq - \int P_{n-1} \,\, d\nu^{(n-1)}+ \frac{1}{\bar m_n}H(\nu^{(n-1)}\mid\mu^{(n-1)}).
\eea \eeq
Plugging the latter into \eqref{second_ineq_Dn} we find exactly $D_n \geq D_{n-1}$.\\
It is now enough to notice that the exact same argument (with $\Phi$ the identity function) implies $D_{k}\geq D_{k-1}$ for every $k\in\{1,\cdots,n-1\}$.\\ 
As we proved that $D_1 \geq 0$, the claim \eqref{claim_fe} follows and the proof of the Lemma is concluded. 
\end{proof}
It follows from Lemma \ref{lem_derivatives} that the unique solution to the Boltzmann-Gibbs principle must be given by $G^{\bar q, \bar {\bf m}}$, where $(\bar q, \bar {\bf m})$ is a minimum point for $P$ on $[\inf \varphi, \sup \varphi]\times \Delta$. As one easily checks that $\Phib[G]-H(G\mid \mu) = \Phib[G^{\bar q, \bar {\bf m}}]-H(G^{\bar q, \bar {\bf m}}\mid \mu) = P(\bar q, \bar {\bf m})$, Theorem \ref{parisi_grem_tap_teor} follows from Theorem \ref{gibbs} together with Lemma \ref{lem_derivatives}.

\end{document}